\def\BibTeX{{\rm B\kern-.05em{\sc i\kern-.025em b}\kern-.08em
    T\kern-.1667em\lower.7ex\hbox{E}\kern-.125emX}}
\newtheorem{theorem}{Theorem}[section]
\newtheorem{lemma}[theorem]{Lemma}
\newtheorem{corollary}[theorem]{Corollary}
\newtheorem{definition}[theorem]{Definition}
\newtheorem{remark}[theorem]{Remark}
\def\t{\rm t}
\def\id{\rm id}
\def\t{\rm t}
\def\bbC{\mathbb C}
\def\bbG{\mathbb G}
\def\bb1{\mathds 1}
\newcommand{\bbS}{\mathbb{S}}
\newcommand{\la}{\langle}
\newcommand{\ra}{\rangle}
\def\U{\mathcal U}
\def\E{\mathcal E}
\begin{document}

\title{Frames for Graph Signals on the Symmetric Group:\\
A Representation Theoretic Approach
\thanks{This project was funded by University of Delaware (UNIDEL summer fellowship). The second author was supported by NSF grant DMS-1902301.}}

\author{\IEEEauthorblockN{Kathryn Beck}
\IEEEauthorblockA{\textit{Department of Mathematical Sciences} \\
\textit{University of Delaware}\\
Newark, DE, USA \\
kbeck@udel.edu}
\and
\IEEEauthorblockN{Mahya Ghandehari}
\IEEEauthorblockA{\textit{Department of Mathematical Sciences} \\
\textit{University of Delaware}\\
Newark, DE, USA \\
mahya@udel.edu}
}

\maketitle

\begin{abstract}
An important problem in the field of graph signal processing is developing appropriate overcomplete dictionaries for signals defined on different families of graphs. The Cayley graph of the symmetric group has natural applications in ranked data analysis, as its vertices represent permutations, while the generating set formalizes a notion of distance between rankings. Taking advantage of the rich theory of representations of the symmetric group, 
we study a particular class of frames, called Frobenius--Schur frames, where every atom belongs to the coefficient space of only one irreducible representation of the symmetric group. 
We provide a characterization for all  Frobenius--Schur frames on the group algebra of the symmetric group which are ``compatible''  with respect to the generating set.
Such frames have been previously studied for the permutahedron, the Cayley graph of the symmetric group with the generating set of adjacent transpositions, and have proved to be capable of producing meaningful interpretation of the ranked data set via the analysis coefficients. 
Our results generalize frame constructions for the permutahedron  to any inverse-closed generating set.

\end{abstract}

\begin{IEEEkeywords}
Graph frame, Cayley graph, permutation group
\end{IEEEkeywords}

\section{Introduction}
A signal on a graph $G$ is a complex-valued function $f$ on the vertex set of $G$. Fixing an ordering $\{v_i\}_{i=1}^N$ of the vertex set, a graph signal $f$ can be represented as a column vector $\left[f(v_1), f(v_2),\cdots, f(v_N)\right]^{\t}$ in $\bbC^N$, where $\t$ denotes the matrix transpose operation. A major objective of the vibrant field of graph signal processing is to analyze such signals not only as vectors in $\bbC^N$, but to take the underlying structure of the graph $G$ into account. 
Over the past few years, the problem of generalizing or adapting classical tools of Fourier analysis to the context of graph signals has attracted the attention of many researchers. 
For a more detailed introduction to graph signal processing and its applications, see \cite{2018:Ortega:GraphSignalOverview}, \cite{2013:Shuman:EmergingFieldSignal}, and \cite{2019:Stankovic:IntroGSP}.

An important technique for analyzing signals in general, and graph signals in particular, is to develop appropriate overcomplete dictionaries for various classes of signals. This idea is formalized in the theory of discrete frames. 
A \emph{frame} for a  finite-dimensional (or infinite-dimensional separable) Hilbert space $\mathcal H$ is a set of vectors $\{\psi_x\}_{x\in X}$ indexed by a countable set $X$, such that for some positive real numbers $A$ and $B$, we have for every $v\in \mathcal{H}$,
\begin{equation}\label{Eframedefn}
  A\|v\|_{\mathcal H}^2
    \leq \sum_{x\in X} |\langle v, \psi_x \rangle |^2
    \leq B\|v\|_{\mathcal H}^2.%
\end{equation}%
The constants $A$ and $B$ in \eqref{Eframedefn} are called the \emph{lower frame bound} and the \emph{upper frame bound} respectively,  
and the \emph{condition number} of the frame is defined to be the ratio $c(\mathcal{F}) := B/A$. 
Frames provide stable, possibly redundant systems which allow reconstruction of a signal $f$ from its frame coefficients $\{\langle f, \psi_x \rangle\}_{x\in X}$. 
When the frame provides a redundant representation, reconstruction of a signal is still possible even when some portion of its frame coefficients are lost or corrupted.
An important class of frames is the class of \emph{tight frames}, {\it i.e.}, frames for which $A=B$.
Compared to general frames, tight frames exhibit greater numerical stability when reconstructing noisy signals.
\emph{Parseval frames} are tight frames in which $A=B=1$.

Over the past couple of decades, various methodologies for constructing frames for graph signals have been investigated. 
In \cite{Hammond:2011:WaveletsOnGraphViaSpectralGraphTheory},
Hammond, Vandergheynst and Gribonval define the graph Fourier transform and apply it to produce wavelet frames for graphs. Other examples of wavelet-type frames can be found in \cite{2017:Dong:SparseRepWavelet,2018:Gobel:TightFramesDenoising,2013:Leonardi:TightWaveletFrames,2013:Shuman:SpectrumAdaptedWavelet}. 
Another significant class of classical frames are Gabor frames. These frames are constructed through applications of translation and modulation operators to a window function. For various constructions of  Gabor-type frames for graph signals, we refer the reader to \cite{2016:Behjat:SignalAdaptedFrames,  2021:Ghandehari:GaborTypeFrames}. 
A survey of localized spectral graph filter frames can be found in \cite{2020:Shuman:UnifyingFramework}.
In this paper, we focus our attention to a particular class of graphs, namely the Cayley graphs of the symmetric group $\bbS_n$, and we use the representation theory of finite groups to construct frames of a ``suitable type'' for signals on these graphs. 
Cayley graphs of $\bbS_n$ have natural applications in ranked data analysis.
The vertex set of such a graph represents all preference rankings of $n$ objects or candidates, as each vertex is associated with a permutation of $n$ items. The choice of the generating set for the Cayley graph formalizes the idea of distance between rankings in the context of a ranked voting model. 
An important example is the \emph{permutahedron}, denoted $\mathbb{P}_n$.
This is the Cayley graph of $\mathbb S_n$ with generating set $S$ of all adjacent transpositions $(i,i+1)$. 
In this case, two rankings are considered  ``close together'' if one ranking can be obtained from the other by switching candidates that are closer together.
In \cite{paper}, Chen et al.~construct a particular frame for signals on the permutahedron, and show that the analysis coefficients with respect to this frame are meaningful in the context of ranked data analysis. 
Examples of interpretations of the analysis coefficients include popularity of candidates, whether a candidate is polarizing, or whether two candidates are likely to be ranked similarly. In order to obtain frames which lead to meaningful analysis coefficients, they build frames in which every atom belongs to the coefficient space of one irreducible representation only. Such frames also allow for interpretation of symmetry and smoothness in the ranked data set.
Structurally speaking, these frames satisfy the favourable property of being compatible with the coefficient spaces of representations of ${\mathbb S}_n$; this motivates the following definition.
\begin{definition}
Let $n\in {\mathbb N}$. 
A frame $\{\psi_i\}_{i=1}^m$  for $\bbC[\bbS_n]$ is called a Frobenius--Schur frame if each atom $\psi_i$ belongs to one orthogonal component of the Frobenius--Schur
decomposition as stated in Theorem~\ref{thm:schur-ortho} (iii).
\end{definition}
This article is dedicated to the construction of Frobenius--Schur frames for signals on all Cayley graphs on $\mathbb S_n$, with \emph{any} choice of the inverse-closed generating set. Namely, we provide a characterization for all possible Frobenius--Schur frames on $\bbC[\bbS_n]$ which satisfy certain compatibility conditions with respect to the generating set $S$. As our method relies heavily on the concrete form of the irreducible representations of the group, we focus our attention on $\bbS_n$, given the rich literature on the concrete forms of representations of $\bbS_n$.

We believe that our results can be used for analysis of ranked data sets in a wide range of settings, 
as taking different generating sets allows us to model   ``closeness'' of ranked data in a variety of ways. Heuristically-speaking, it may be advantageous to define rankings to be closer together if one ranking can be obtained from the other by switching candidates that are near each other in, say, the top half of the ranking. This could be useful in a ranked vote where only the first place candidate is selected as the ``winner'', and the analysis is focused on candidates placed higher up in the preference rankings. As another example, the generating set of all transpositions allows for a more highly connected notion of closeness where rankings are close together if any two candidates are switched.

This paper is organized as follows. In Section~\ref{section: notations}, we collate the necessary background on Cayley graphs and the representation theory of ${\mathbb S}_n$. In Section~\ref{sec:frames}, we define the notion of frame compatibility with a generating set and present our main results for constructing Frobenius--Schur Frames for $L^2(\bbS_n)$. 
In Section~\ref{sec:example} we provide an explicit example of Frobenius--Schur Frames for $L^2(\bbS_3)$ that are compatible with the generating set of adjacent transpositions.

\section{Notations and Background}\label{section: notations}
Throughout this article, we use $\bbG$ to denote a finite (not necessarily Abelian)  group of size $N$. The space of all signals $f:\bbG\rightarrow \bbC$ is denoted by $\bbC[\bbG]$. The group algebra $\bbC[\bbG]$, equipped with inner product
$\langle f,g\rangle=\sum_{x\in\bbG} f(x)\overline{g(x)}$, is a Hilbert space isometrically isomorphic to $\bbC^{N}$, which we denote by $L^2(\bbG)$.
Given that a signal $f:\bbG \rightarrow \bbC$ in $\mathbb C[\mathbb G]$ can be viewed as a vector in the vector space $\mathbb C^{N}$, 
for the remainder of the paper, we refer to $\mathbb C[\mathbb G]$ and $L^2(\mathbb G)$ interchangeably, where we view $f$ as a function in the context of $\mathbb C[\mathbb G]$ and as a vector in the context of $L^2(\mathbb G)$.

Let ${\mathbb G}$ be a finite group, and $S\subseteq {\mathbb G}$ be an inverse-closed subset of ${\mathbb G}$ (i.e., if $x\in S$ then $x^{-1}\in S$). The \textit{Cayley graph} $G({\mathbb G},S)$ encodes the group structure of ${\mathbb G}$ with respect to $S$. Namely, the vertex set of $G({\mathbb G},S)$ is ${\mathbb G}$, and vertices $x,y\in{\mathbb G}$ form an edge if  $x^{-1}y\in S$. The set $S$ is called the \emph{generating set} of the Cayley graph $G({\mathbb G},S)$. The fact that the generating set is inverse-closed guarantees that the associated Cayley graph is not directed. 

In the next subsection, we provide the necessary background for the representation theory of finite groups and their associated function spaces in general. We then focus our attention to these concepts for ${\mathbb S}_n$. 
\subsection{The Frobenius--Schur Decomposition}\label{subsec:decomp}
%
A \emph{unitary representation} of $\bbG$ of dimension $d$ is a group homomorphism $\pi: \bbG\rightarrow \U_d(\bbC)$, where $\U_d(\bbC)$ denotes the (multiplicative) group of unitary matrices of size $d$. 
Here, we restrict our attention to unitary representations. This is non-consequential as every representation of a finite group can be turned into a unitary representation by a change of inner product on the representation space (see for example  \cite[Section 1.3]{Serre}). 
For a given (unitary) representation $\pi$,  a subspace $W$ of $\bbC^d$ is called \emph{$\pi$-invariant} if $\pi(g)W:=\{\pi(g)\xi: \ \xi\in W\}\subseteq W$ for all $g\in \mathbb{G}$. A representation $\pi$ is called \emph{irreducible} if $\{0\}$ and $\mathbb{C}^d$ are its only closed $\pi$-invariant subspaces. 
Every unitary representation of a finite group completely decomposes into a direct sum of its irreducible representations.
Two representations $\pi$ and $\sigma$ of $\bbG$ are called \emph{unitarily equivalent} if there exists a unitary matrix $U$ such that $U^{-1}\pi(g)U=\sigma(g)$ for all $g\in \bbG$. 
We let $\widehat{\bbG}$ denote the collection of all (equivalence classes of) irreducible unitary representations of $\bbG$. 

For an arbitrary $\pi\in\widehat{\bbG}$ of dimension $d_\pi$, and vectors $\xi,\eta\in\bbC^{d_\pi}$, we define the \emph{coefficient function} associated with the representation $\pi$ and the vectors $\xi, \eta$ as follows:
\begin{equation}\label{equation:coefficient function}
\pi_{\xi,\eta}:\bbG\to \bbC, \ \ \pi_{\xi,\eta}(g)=\langle\pi(g)\xi, \eta\rangle, \ \forall g\in \bbG.
\end{equation}
When the group $\bbG$ is ordered as $\bbG=\{g_1,\ldots, g_N\}$, we can represent $\pi_{\xi,\eta}$ as a vector in $\bbC^N$, namely, 
$$\pi_{\xi,\eta}=\Big[\langle\pi(g_1)\xi, \eta\rangle, \ldots, \langle\pi(g_N)\xi, \eta\rangle\Big]^{\t}.$$
Given a standard orthonormal basis $\{e_i\}_{i=1}^{d_\pi}$ for $\bbC^{d_\pi}$, the coefficient functions 
$$\pi_{i,j}(x):=\pi_{e_i, e_j}(x)=\la \pi(x)e_i,e_j \ra, \ i,j=1,\ldots, d_\pi$$
are the entries of the matrix of $\pi(x)$ represented in the same basis. 

Coefficient functions, and the subspaces generated by them, play a central role in the harmonic analysis of non-Abelian groups. 
Let $\pi\in \widehat{\mathbb G}$ and $1\leq i\leq d_{\pi}$ be fixed. Define
\begin{equation}\label{equation:E-pi-i}
    {\mathcal E}_{\pi,i}=\big\{\pi_{\xi,e_i}:\xi\in \bbC^{d_\pi}\big\},
\end{equation}
the space of all coefficient functions of $\pi$, fixed in the second entry. It is easy to observe that for each $i$, the set  ${\mathcal E}_{\pi,i}$ forms a right-invariant subspace of $\bbC^N$, that is, for every $f\in{\mathcal E}_{\pi,i}$ and $y\in \bbG$, the map $f_y:\bbG\to \bbC$ defined as $f_y(x)=f(xy)$ also belongs to ${\mathcal E}_{\pi,i}$.
The well-known theorem of Frobenius and Schur provides a direct sum decomposition of $L^2(\mathbb G)$ into subspaces of the form  ${\mathcal E}_{\pi,i}$. 
\begin{theorem}[Frobenius--Schur decomposition]\label{thm:schur-ortho}
Let $\bbG$ be a finite group, and $\pi,\sigma$ be irreducible unitary representations of $\bbG$.
\begin{enumerate}
    \item[(i)]\label{thm:schur1} If $\pi$ and $\sigma$ are not unitarily equivalent then $\E_{\pi,i} \perp \E_{\sigma,j}$ for all $1\leq i\leq d_\pi$ and $1\leq j\leq d_\sigma$.
    \item[(ii)] Every orthonormal basis  $\{e_j\}_{j=1}^{d_\pi}$ for $\bbC^{d_\pi}$ leads to an orthonormal basis for $\E_{\pi,i}$ given by
    \begin{equation}\label{equation:schur orthogonality}
    \left\{\sqrt{\frac{d_\pi}{|\bbG|}} \pi_{j,i}:j=1,..,d_\pi \right\}.
    \end{equation}
    \item[(iii)]\label{thm:schur2} $L^2(\bbG)=\bigoplus_{\pi \in \widehat{\bbG}}\bigoplus_{1\leq i\leq d_\pi} {\cal E}_{\pi,i}$, with the associated orthonormal basis
    \[ \left\{\phi^{\pi}_{i,j}:=\sqrt{\frac{d_{\pi}}{|\bbG|}} \pi_{i,j}: i,j=1,...,d_\pi,\  \pi \in \widehat{\bbG} \right\}.  \]
    We call ${\cal E}_{\pi,i}$ the orthogonal subspaces of the Frobenius--Schur decomposition.
\end{enumerate}
\end{theorem}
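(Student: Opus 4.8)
The plan is to obtain all three parts as consequences of the Schur orthogonality relations, which I would first establish from Schur's lemma. Given $\pi,\sigma\in\widehat{\bbG}$ and an arbitrary linear map $T\colon\bbC^{d_\sigma}\to\bbC^{d_\pi}$, set $\widetilde T=\frac{1}{|\bbG|}\sum_{g\in\bbG}\pi(g)\,T\,\sigma(g)^{-1}$. After reindexing the sum by $g\mapsto hg$ one checks $\pi(h)\widetilde T=\widetilde T\sigma(h)$ for every $h\in\bbG$, so Schur's lemma forces $\widetilde T=0$ when $\pi\not\cong\sigma$ and $\widetilde T=\bigl(\Tr T/d_\pi\bigr)I$ when $\pi=\sigma$, the scalar being read off by taking traces. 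Letting $T$ range over the matrix units, extracting a single matrix entry, and using $\overline{\sigma_{c,d}(g)}=\sigma_{d,c}(g^{-1})$ (valid since $\sigma$ is unitary), yields
\[
\sum_{g\in\bbG}\pi_{a,b}(g)\,\overline{\sigma_{c,d}(g)}=
\begin{cases}
\dfrac{|\bbG|}{d_\pi}\,\delta_{a,c}\,\delta_{b,d}, & \pi=\sigma,\\[1.2ex]
0, & \pi\not\cong\sigma.
\end{cases}
\]

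Parts (i) and (ii) are then short computations resting on this identity. For any $\xi=\sum_a c_a e_a$ in $\bbC^{d_\pi}$ we have $\pi_{\xi,e_i}=\sum_a c_a\,\pi_{a,i}$, so $\E_{\pi,i}=\operatorname{span}\{\pi_{a,i}:1\le a\le d_\pi\}$, and similarly for $\sigma$. Expanding the inner product of two such coefficient functions and invoking the orthogonality relations, every term vanishes when $\pi\not\cong\sigma$, which is (i); and for $\sigma=\pi$ with the second index held at $i$ we get $\langle\pi_{a,i},\pi_{a',i}\rangle=\frac{|\bbG|}{d_\pi}\delta_{a,a'}$, so $\bigl\{\sqrt{d_\pi/|\bbG|}\,\pi_{a,i}\bigr\}_{a=1}^{d_\pi}$ is an orthonormal set spanning $\E_{\pi,i}$, hence an orthonormal basis of it (and in particular $\dim\E_{\pi,i}=d_\pi$). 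Since the orthogonality relations hold with respect to \emph{any} orthonormal basis of $\bbC^{d_\pi}$, the same computation gives (ii) for every choice of $\{e_j\}_{j=1}^{d_\pi}$.

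For (iii), the same relation with distinct second indices $i\neq l$ gives $\langle\pi_{a,i},\pi_{a',l}\rangle=\frac{|\bbG|}{d_\pi}\delta_{a,a'}\delta_{i,l}=0$, so the spaces $\E_{\pi,i}$, $1\le i\le d_\pi$, are pairwise orthogonal; together with (i) the whole family $\{\E_{\pi,i}:\pi\in\widehat{\bbG},\,1\le i\le d_\pi\}$ is pairwise orthogonal. Thus $\bigoplus_\pi\bigoplus_i\E_{\pi,i}$ is an orthogonal subspace of $L^2(\bbG)$ of dimension $\sum_{\pi\in\widehat{\bbG}}d_\pi^2$, and I would close the argument by invoking the classical identity $\sum_{\pi\in\widehat{\bbG}}d_\pi^2=|\bbG|=\dim L^2(\bbG)$ (obtained by decomposing the left regular representation, in which each $\pi$ appears with multiplicity $d_\pi$): equality of dimensions forces the direct sum to exhaust $L^2(\bbG)$, and the union over $\pi$ of the bases from (ii) is precisely the asserted orthonormal basis $\{\phi^\pi_{i,j}\}$.

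The only genuinely substantive ingredient is the completeness step, namely that the matrix coefficients of the irreducibles span all of $L^2(\bbG)$, equivalently $\sum_\pi d_\pi^2=|\bbG|$; everything else is bookkeeping built on Schur's lemma. In a conference-length treatment I would state Steps~1--3 compactly and cite both Schur orthogonality and the completeness of the coefficient system from a standard reference such as \cite{Serre}, rather than redevelop the regular-representation decomposition in full.
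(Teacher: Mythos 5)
Your proof is correct: deriving the Schur orthogonality relations from Schur's lemma via the averaged intertwiner $\widetilde T$, and then obtaining completeness from $\sum_{\pi}d_\pi^2=|\bbG|$ through the regular representation, is the standard argument, and all three parts follow from it exactly as you describe. The paper itself offers no proof of this theorem --- it is quoted as the classical Frobenius--Schur result with \cite{Serre} as the background reference --- so there is no alternative approach to compare against; your plan to cite both the orthogonality relations and the completeness of the coefficient system from a standard reference matches what the authors implicitly do.
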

\subsection{Representations of $\mathbb S_n$}\label{subsec:rep}
In this section, we present the foundational tools for describing irreducible representations of $\bbS_n$, focusing on only the background necessary to our construction of Frobenius--Schur frames for $L^2(\bbS_3)$ in Section \ref{sec:example}. 
%
Irreducible representations of ${\mathbb S}_n$ are in one-to-one correspondence with the partitions $\lambda \vdash n$ of $n$. A \emph{partition} $\lambda$ of the integer $n$, denoted $\lambda \vdash n$, is a decomposition of $n$ into a sum of positive integers. Equivalently,  $\lambda: (\lambda_1,..,\lambda_k)$ is a partition of $n$ if $\lambda_1\geq \ldots\geq \lambda_k$ and $n=\sum_{i=1}^k \lambda_i$.
A partition $\lambda: (\lambda_1,..,\lambda_k)$ can be represented by a Young diagram with shape $\lambda : (\lambda_1,..,\lambda_k)$, that is, a Young diagram that contains $\lambda_i$ boxes in its $i$th row, for every $1\leq i\leq k$. As a result, every Young diagram of size $n$ corresponds with an irreducible representation of $\bbS_n$. 
A further extension of the Young diagram is the Young tableau, a Young diagram on $n$ blocks where each block is uniquely labeled from the set $\{1,2,\dots, n-1,n\}$. We say that a Young tableau is in standard form if the labels in each row increase from left to right and the labels in each column increase from top to bottom. 
%
%
Young tableaux provide significant concrete information about the irreducible representations of $\bbS_n$. For example, the number of standard Young tableaux for a given partition $\lambda \vdash n$ is the dimension of the associated irreducible representation of $\bbS_n$. We refer the reader to \cite{FultonHarris} for more details on the information that can be retrieved from Young tableaux, such as formulas regarding the characters and the concrete matrix form of the representations.


\section{Frobenius--Schur frames for $\mathbb S_n$}\label{sec:frames}
Throughout this section, let $G$ be the Cayley graph of $\bbS_n$ with a fixed generating set $S\subseteq \bbS_n$. We  construct Frobenius--Schur frames for $L^2(\bbS_n)$ that are ``compatible'' with the generating set $S$. 
For a representation $\pi$ of $\bbS_n$, we define $\pi(S)=\sum_{a\in S}\pi(a)$. Since $S$ is inverse-closed and $\pi$ is unitary, the matrix $\pi(S)$ is self-adjoint.
For the remainder of the paper, we fix an ordering of the elements of $\bbS_n$, and identify elements of $L^2(\bbS_n)$ with vectors in $\bbC^{|\bbS_n|}$, when needed.
\begin{definition}\label{def:frame-compatible}
A Frobenius--Schur frame $\{\phi_i\}_{i=1}^m$ of $L^2(\bbS_n)$ is said to be \emph{compatible with $S$} if for every atom $\phi_i$ 
there exists an irreducible representation $\pi:\bbS_n\to {\mathcal U}_{d_\pi}(\bbC)$ and an index $1\leq j\leq d_\pi$ such that,
\begin{equation}
    \phi_i(z)=R_{\pi,j}(z)X,\ z\in \bbS_n
\end{equation}
where $R_{\pi,j}(z)$ is the $j$th row of $\pi(z)$, and $X$ is an eigenvector of $\pi(S)$. 
\end{definition}

Let $\pi$ be an irreducible representation of ${\bbS_n}$ of dimension $d_\pi$, and let $1\leq i\leq d_\pi$ be fixed. 
For $\lambda\in {\mathbb R}$, define 
\begin{equation}\label{eq:Z-space}
Z_{\pi, i, \lambda}:=\left\{\sum_{k=1}^{d_\pi}x_k\pi_{k,i}: \ 
\left[
\begin{array}{c}
x_1 \\
\vdots \\
x_{d_\pi}  
\end{array}
\right]
\in E_{\lambda}(\pi(S)) \right\},
\end{equation}
where $E_{\lambda}(\pi(S))$ is the $\lambda$-eigenspace of $\pi(S)$. Since $\pi_{k,i}$ belongs to $\E_{\pi,i}$ for every $k=1,\ldots, d_\pi$, 
$Z_{\pi, i, \lambda}\subseteq \E_{\pi,i}$.
Also, note that if $\lambda$ is not an eigenvalue of $\pi(S)$, then $Z_{\pi, i, \lambda}=\{{ 0}\}$. 

\begin{lemma}\label{lem:frob-schur-Z description}
A frame for $L^2(\bbS_n)$ is a Frobenius--Schur frame compatible with $S$ \emph{iff} every frame atom belongs to $Z_{\pi, i, \lambda}$ for some $\pi\in \widehat{\bbS_n}$, some real scalar $\lambda$, and some index $i$.
\end{lemma}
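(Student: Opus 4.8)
The plan is to observe that, once the matrix entries of $\pi(z)$ are spelled out, the two conditions ``$\{\phi_i\}$ is a Frobenius--Schur frame compatible with $S$'' and ``every atom lies in some $Z_{\pi,j,\lambda}$'' are literally the same requirement written in two notations, so the proof is a short unwinding of definitions. The device that makes this precise is the elementary identity
\[
R_{\pi,j}(z)\,\mathbf{x}\;=\;\sum_{k=1}^{d_\pi}[\pi(z)]_{j,k}\,x_k\;=\;\sum_{k=1}^{d_\pi}x_k\,\pi_{k,j}(z),\qquad z\in\bbS_n,
\]
valid for every $\mathbf{x}=(x_1,\ldots,x_{d_\pi})^{\t}\in\bbC^{d_\pi}$, where the second equality uses $\pi_{k,j}(z)=\la\pi(z)e_k,e_j\ra=[\pi(z)]_{j,k}$ under the convention of \eqref{equation:coefficient function}. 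Thus the function $z\mapsto R_{\pi,j}(z)\mathbf{x}$ coincides with the element $\sum_k x_k\pi_{k,j}$ of $\E_{\pi,j}$, and it is the generic element of $Z_{\pi,j,\lambda}$ exactly when $\mathbf{x}\in E_\lambda(\pi(S))$.

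For the forward direction I would start from a Frobenius--Schur frame $\{\phi_i\}_{i=1}^m$ compatible with $S$ and fix an atom $\phi_i$; by Definition~\ref{def:frame-compatible} there are $\pi\in\widehat{\bbS_n}$, an index $j$, and an eigenvector $X$ of $\pi(S)$, say with $\pi(S)X=\lambda X$, so that $\phi_i(z)=R_{\pi,j}(z)X$ for all $z\in\bbS_n$. Applying the identity above rewrites $\phi_i$ as $\sum_k X_k\pi_{k,j}$ with the coefficient vector $X\in E_\lambda(\pi(S))$, hence $\phi_i\in Z_{\pi,j,\lambda}$; so every atom has the required form.

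For the converse I would start from a frame $\{\phi_i\}_{i=1}^m$ for $L^2(\bbS_n)$ each of whose atoms lies in some $Z_{\pi,j,\lambda}$ (with the triple depending on the atom), fix an atom $\phi$ and a triple $(\pi,j,\lambda)$ with $\phi\in Z_{\pi,j,\lambda}$, and verify the two clauses of the definition separately. First, $Z_{\pi,j,\lambda}\subseteq\E_{\pi,j}$ (noted right after \eqref{eq:Z-space}), so $\phi$ lies in a single orthogonal subspace of the Frobenius--Schur decomposition of Theorem~\ref{thm:schur-ortho}(iii); since this holds for every atom, $\{\phi_i\}$ is a Frobenius--Schur frame. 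Second, by the definition of $Z_{\pi,j,\lambda}$ I can write $\phi=\sum_k x_k\pi_{k,j}$ with $\mathbf{x}=(x_1,\ldots,x_{d_\pi})^{\t}\in E_\lambda(\pi(S))$, and reading the displayed identity from right to left gives $\phi(z)=R_{\pi,j}(z)\mathbf{x}$; as a frame has no zero atom we get $\mathbf{x}\neq 0$, so $\mathbf{x}$ is a genuine eigenvector of $\pi(S)$, which is precisely the condition in Definition~\ref{def:frame-compatible}. Hence $\{\phi_i\}$ is a Frobenius--Schur frame compatible with $S$.

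I do not expect a genuine obstacle here; the only delicate point is the index bookkeeping forced by the transpose relating $\pi_{k,j}$ to the matrix entry $[\pi(z)]_{j,k}$, which must be kept straight so that the row index $j$ of $R_{\pi,j}$ is paired with the \emph{fixed} second index of $\E_{\pi,j}$ and $Z_{\pi,j,\lambda}$ rather than with the summation index $k$. One should also fix the convention that eigenvectors are nonzero; this is harmless since frame atoms are nonzero and $Z_{\pi,j,\lambda}=\{0\}$ whenever $\lambda$ is not an eigenvalue of $\pi(S)$.
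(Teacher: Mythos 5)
Your proof is correct and follows the same route as the paper's, which simply asserts that the forward direction is a direct calculation and that the converse follows from $Z_{\pi,i,\lambda}\subseteq\E_{\pi,i}$ together with the definition of $Z_{\pi,i,\lambda}$; you have merely written out the identity $R_{\pi,j}(z)\mathbf{x}=\sum_k x_k\pi_{k,j}(z)$ that the paper leaves implicit. The index bookkeeping and the nonzero-eigenvector remark are both handled correctly.
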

\begin{proof}
The forward direction follows easily from direct calculations. 
For the converse, suppose that for a given frame, every atom belongs to $Z_{\pi, i, \lambda}$ for some $\pi\in \widehat{\bbS_n}$, some real scalar $\lambda$, and some index $i$. 
Since $Z_{\pi, i, \lambda}\subseteq \E_{\pi,i}$, the frame must be a Frobenius--Schur frame. The fact that the frame is compatible with $S$ trivially follows from the definition of $Z_{\pi, i, \lambda}$.
\end{proof}

The sets $Z_{\pi, i, \lambda}$ are isomorphic to the $\lambda$-eigenspace of $\pi(S)$.
\begin{lemma}\label{lemma: isometry}
With notation as in \eqref{eq:Z-space}, we have:
\begin{itemize}
\item[(i)] The set $Z_{\pi, i, \lambda}$ is a subspace of $\E_{\pi,i}\subseteq L^2(\bbS_n)$.
\item[(ii)] $Z_{\pi, i, \lambda}$ and $E_{\lambda}(\pi(S))$ are isomorphic as linear spaces via the map $\Theta_{\pi, i, \lambda}:E_{\lambda}(\pi(S))\to Z_{\pi, i, \lambda},$ defined by
\begin{equation}\label{eq:isom-theta}
\Theta_{\pi, i, \lambda}([x_1 \ldots x_{d_{\pi}}]^{\t})= \sum_{k=1}^{d_\pi}x_k\pi_{k,i}.
\end{equation}
So, the subspace $Z_{\pi, i, \lambda}$ is the trivial subspace containing only ${0}\in L^2({\mathbb S}_n)$ \emph{iff} $\lambda$ is not an eigenvalue of $\pi(S)$.
\item[(iii)] The map $\widetilde{\Theta}_{\pi, i, \lambda}:=\sqrt{\frac{d_{\pi}}{n!}}\Theta_{\pi, i, \lambda}$ is an isometry. 
\item[(iv)] If $(\pi, i, \lambda)\neq (\pi', i', \lambda')$, then $Z_{\pi, i, \lambda}$ and $Z_{\pi', i', \lambda'}$ form orthogonal subspaces in $L^2(\bbS_n)$.
\end{itemize}
\end{lemma}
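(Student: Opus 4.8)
The plan is to handle the four parts in order, reducing (i)--(iii) to the orthogonality relations in Theorem~\ref{thm:schur-ortho} and splitting (iv) into three cases according to how the triples $(\pi,i,\lambda)$ and $(\pi',i',\lambda')$ differ. The organizing observation is that, for fixed $\pi\in\widehat{\bbS_n}$ and fixed $i$, all the maps $\Theta_{\pi,i,\lambda}$ (for varying $\lambda$) are restrictions of a single linear map
\[
\Theta_{\pi,i}:\bbC^{d_\pi}\to L^2(\bbS_n),\qquad [x_1\ \ldots\ x_{d_\pi}]^{\t}\mapsto\sum_{k=1}^{d_\pi}x_k\pi_{k,i},
\]
so facts about $\Theta_{\pi,i}$ transfer at once to each $\Theta_{\pi,i,\lambda}$.

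For (i), I would note that $Z_{\pi,i,\lambda}=\Theta_{\pi,i}\big(E_\lambda(\pi(S))\big)$ is the image of a subspace under a linear map, hence a subspace; and since each $\pi_{k,i}\in\E_{\pi,i}$ and $\E_{\pi,i}$ is itself a subspace, the image lies in $\E_{\pi,i}$. For (ii), $\Theta_{\pi,i,\lambda}$ is linear and surjective onto $Z_{\pi,i,\lambda}$ by construction, so only injectivity is at issue; this follows because $\{\pi_{k,i}:k=1,\ldots,d_\pi\}$ is linearly independent (a scalar multiple of it is the orthonormal basis of $\E_{\pi,i}$ from Theorem~\ref{thm:schur-ortho}(ii)), hence $\Theta_{\pi,i}$ is injective on all of $\bbC^{d_\pi}$, a fortiori on $E_\lambda(\pi(S))$. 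The trailing equivalence in (ii) then just combines this with the triviality $E_\lambda(\pi(S))=\{0\}$ exactly when $\lambda$ is not an eigenvalue of $\pi(S)$.

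For (iii), I would compute directly that for $x,y\in\bbC^{d_\pi}$,
\[
\big\langle\Theta_{\pi,i}x,\Theta_{\pi,i}y\big\rangle_{L^2(\bbS_n)}=\sum_{k,\ell}x_k\overline{y_\ell}\,\langle\pi_{k,i},\pi_{\ell,i}\rangle=\tfrac{n!}{d_\pi}\langle x,y\rangle_{\bbC^{d_\pi}},
\]
where the second equality uses $\langle\pi_{k,i},\pi_{\ell,i}\rangle=\frac{n!}{d_\pi}\delta_{k\ell}$, itself immediate from the orthonormality statement in Theorem~\ref{thm:schur-ortho}(ii) and $|\bbS_n|=n!$. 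Multiplying by the normalizing factor $d_\pi/n!$ shows $\widetilde\Theta_{\pi,i,\lambda}=\sqrt{d_\pi/n!}\,\Theta_{\pi,i,\lambda}$ preserves inner products, i.e.\ is an isometry. The key point to record here, for use in (iv), is that this identity holds on all of $\bbC^{d_\pi}$, not merely on a single eigenspace.

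For (iv), assume $(\pi,i,\lambda)\neq(\pi',i',\lambda')$. If $\pi,\pi'$ are not unitarily equivalent, then $\E_{\pi,i}\perp\E_{\pi',i'}$ by Theorem~\ref{thm:schur-ortho}(i), and the inclusions $Z_{\pi,i,\lambda}\subseteq\E_{\pi,i}$, $Z_{\pi',i',\lambda'}\subseteq\E_{\pi',i'}$ finish it. If $\pi=\pi'$ but $i\neq i'$, then $\E_{\pi,i}\perp\E_{\pi,i'}$ since the orthonormal basis of Theorem~\ref{thm:schur-ortho}(iii) splits these two spaces into disjoint sub-families of basis vectors, and again containment concludes. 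If $\pi=\pi'$, $i=i'$, but $\lambda\neq\lambda'$, take $x\in E_\lambda(\pi(S))$ and $y\in E_{\lambda'}(\pi(S))$; by the identity in (iii), $\langle\Theta_{\pi,i,\lambda}x,\Theta_{\pi,i,\lambda'}y\rangle=\frac{n!}{d_\pi}\langle x,y\rangle_{\bbC^{d_\pi}}=0$, because $\pi(S)$ is self-adjoint (inverse-closedness of $S$ plus unitarity of $\pi$) and eigenspaces of a self-adjoint matrix for distinct eigenvalues are orthogonal. I do not anticipate a genuine obstacle; the only place needing care is this case analysis in (iv), and in particular the realization that the $\lambda\neq\lambda'$ case is not an orthogonality between distinct Frobenius--Schur subspaces but an internal orthogonality inside $\E_{\pi,i}$ that must be extracted from self-adjointness of $\pi(S)$ via the isometry of part (iii).
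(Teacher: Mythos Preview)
Your proposal is correct and follows essentially the same route as the paper: (i) image of a subspace under a linear map, (ii) injectivity from linear independence of $\{\pi_{k,i}\}$, (iii) a direct inner-product computation via Schur orthogonality, and (iv) the same three-case split with the $\lambda\neq\lambda'$ case handled by self-adjointness of $\pi(S)$ together with the scaled-isometry identity. Your explicit introduction of the global map $\Theta_{\pi,i}$ on all of $\bbC^{d_\pi}$ and your remark that the identity in (iii) holds on the whole space (not just one eigenspace) make the logic of the $\lambda\neq\lambda'$ case slightly more transparent than the paper's presentation, but the substance is identical.
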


\begin{proof}
Part (i) holds, since $E_{\lambda}(\pi(S))$ is a linear subspace of ${\mathbb C}^{d_{\pi}}$. 
To prove part (ii), note that $\Theta_{\pi, i, \lambda}$ is clearly a linear surjective map. To show injectivity, assume $\Theta_{\pi, i, \lambda}([x_1 \ldots x_{d_{\pi}}]^{\t})=\sum_{k=1}^{d_\pi}x_k\pi_{k,i}=0$. Since $\{\pi_{k,i}: 1\leq k\leq d_\pi\}$ is a linearly independent subset of $L^2({\mathbb S}_n)$, we 
get $x_k=0$ for all $k$, and $\Theta_{\pi,i,\lambda}$ is injective.
Thus,  $\Theta_{\pi, i, \lambda}$ is a linear isomorphism. 
As a consequence, $\Theta_{\pi, i, \lambda}$ keeps the dimension unchanged; this implies the last statement. 

To prove (iii), let $X=[x_1\dots x_{d_{\pi}}]^{\t}, Y=[y_1\dots y_{d_{\pi}}]^{\t}$ be arbitrary elements of $E_{\lambda}(\pi(S))$. 
By  the Schur orthogonality relations (Theorem~\ref{thm:schur-ortho}), we have
\begin{align*}
    &\left\langle \widetilde{\Theta}_{\pi,i,\lambda}(X),\widetilde{\Theta}_{\pi,i,\lambda}(Y)\right\rangle
    =\frac{d_{\pi}}{n!} \sum_{k,\ell=1}^{d_{\pi}} x_k \overline{y_{\ell}}\left\langle \pi_{k,i}, \pi_{\ell,i}\right\rangle_{L^2({\mathbb S}_n)}\\
    &=\frac{d_{\pi}}{n!} \sum_{k=1}^{d_{\pi}} x_k\overline{y_k} \frac{n!}{d_{\pi}}
    =\sum_{k=1}^{d_{\pi}}x_k \overline{y_k}=\left\langle X,Y \right\rangle_{{\mathbb C}^{d_\pi}}.
\end{align*}

To prove part (iv), suppose $(\pi, i, \lambda)\neq (\pi', i', \lambda')$. Since $Z_{\pi, i, \lambda}\subseteq \E_{\pi,i}$ and 
$Z_{\pi', i', \lambda'}\subseteq \E_{\pi',i'}$, by Theorem~\ref{thm:schur-ortho}, if $\pi\neq \pi'$ or $i\neq i'$, the two spaces $Z_{\pi, i, \lambda}$ and $Z_{\pi', i', \lambda'}$ are orthogonal. Now suppose that $\pi=\pi'$ and $i=i'$, but $\lambda\neq \lambda'$. We have that $E_\lambda(\pi(S))$ and $E_{\lambda'}(\pi(S))$ are orthogonal subspaces of $\bbC^{d_\pi}$. So for $X\in E_{\lambda}(\pi(S))$ and $Y\in E_{\lambda'}(\pi(S))$, we have 
\begin{align*}
    &\left\langle \sum_{k=1}^{d_\pi}x_k\pi_{k,i} , \sum_{\ell=1}^{d_\pi}y_\ell\pi_{\ell,i}\right\rangle_{L^2({\mathbb S}_n)}=\sum_{k,\ell=1}^{d_\pi} x_k\overline{y_\ell}\langle\pi_{k,i} ,\pi_{\ell,i}\rangle\\
    &=\sum_{k=1}^{d_\pi} x_k\overline{y_k}\frac{d_\pi}{n!}=\frac{d_\pi}{n!}\langle X,Y\rangle_{{\mathbb C}^{d_\pi}}=0.
\end{align*}
\end{proof}
\begin{theorem}\label{thm:Z-pi-decomp}
For every $\pi\in\widehat{\mathbb S}_n$ and $1\leq i\leq d_\pi$, we have $\E_{\pi,i}=\bigoplus_{\lambda\in\sigma(\pi(S))} Z_{\pi,i,\lambda}.$
\end{theorem}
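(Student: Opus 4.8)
The plan is to transport the eigenspace decomposition of the self-adjoint matrix $\pi(S)$ through the isomorphism $\Theta_{\pi,i,\lambda}$ established in Lemma~\ref{lemma: isometry}. Since $S$ is inverse-closed and $\pi$ is unitary, $\pi(S)=\sum_{a\in S}\pi(a)$ is self-adjoint, so by the spectral theorem $\bbC^{d_\pi}=\bigoplus_{\lambda\in\sigma(\pi(S))} E_\lambda(\pi(S))$ as an orthogonal direct sum. The map $\Theta_{\pi,i}:=\bigoplus_\lambda \Theta_{\pi,i,\lambda}$, or more concretely the single linear map $[x_1\ldots x_{d_\pi}]^{\t}\mapsto \sum_{k=1}^{d_\pi} x_k\pi_{k,i}$ on all of $\bbC^{d_\pi}$, sends the $\lambda$-eigenspace onto $Z_{\pi,i,\lambda}$ by definition \eqref{eq:Z-space}, and this map is a linear isomorphism from $\bbC^{d_\pi}$ onto $\E_{\pi,i}$ because $\{\pi_{k,i}:1\le k\le d_\pi\}$ is a basis for $\E_{\pi,i}$ (Theorem~\ref{thm:schur-ortho}(ii), up to the scalar $\sqrt{d_\pi/|\bbG|}$).

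First I would record the spectral decomposition $\bbC^{d_\pi}=\bigoplus_{\lambda\in\sigma(\pi(S))} E_\lambda(\pi(S))$, noting the sum is direct and exhausts $\bbC^{d_\pi}$ because $\pi(S)$ is self-adjoint. Next I would observe that applying the linear isomorphism $\Theta:=\Theta_{\pi,i}$ (restricting to each summand gives the maps $\Theta_{\pi,i,\lambda}$ of \eqref{eq:isom-theta}) to both sides yields $\Theta(\bbC^{d_\pi})=\sum_{\lambda} \Theta(E_\lambda(\pi(S)))$, i.e. $\E_{\pi,i}=\sum_{\lambda\in\sigma(\pi(S))} Z_{\pi,i,\lambda}$, since a linear isomorphism carries a spanning direct-sum decomposition of the domain to a spanning decomposition of the codomain and injectivity preserves directness of the sum. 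Finally, the sum is orthogonal: by Lemma~\ref{lemma: isometry}(iv), $Z_{\pi,i,\lambda}\perp Z_{\pi,i,\lambda'}$ whenever $\lambda\neq\lambda'$ (this is exactly the case $\pi=\pi'$, $i=i'$, $\lambda\neq\lambda'$ of that part), so the decomposition is an orthogonal direct sum, and the notation $\bigoplus$ is justified.

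Alternatively, and perhaps cleanly enough to avoid invoking directness abstractly, one can argue that $\sum_\lambda \dim Z_{\pi,i,\lambda} = \sum_\lambda \dim E_\lambda(\pi(S)) = d_\pi = \dim \E_{\pi,i}$, using Lemma~\ref{lemma: isometry}(ii) for the first equality, the spectral theorem for the second, and Theorem~\ref{thm:schur-ortho}(ii) for the last; combined with the pairwise orthogonality from Lemma~\ref{lemma: isometry}(iv) and the containments $Z_{\pi,i,\lambda}\subseteq\E_{\pi,i}$, this forces $\E_{\pi,i}=\bigoplus_\lambda Z_{\pi,i,\lambda}$.

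There is no serious obstacle here; the theorem is essentially a corollary of Lemma~\ref{lemma: isometry} together with the self-adjointness of $\pi(S)$. The only point requiring a line of care is making sure that the sum on the right is \emph{direct} rather than merely spanning — but this is immediate from either the injectivity of $\Theta_{\pi,i}$ on $\bbC^{d_\pi}$ or from the dimension count, and one gets orthogonality for free from part (iv) of the lemma, which is strictly stronger than directness.
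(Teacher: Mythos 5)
Your proposal is correct and follows essentially the same route as the paper: both rest on the spectral decomposition $\bbC^{d_\pi}=\bigoplus_{\lambda}E_\lambda(\pi(S))$ coming from the self-adjointness of $\pi(S)$, transport it to $\E_{\pi,i}$ via the map $X\mapsto\sum_k x_k\pi_{k,i}$ (the paper does this element by element rather than as a global isomorphism, but the content is identical), and invoke Lemma~\ref{lemma: isometry}(iv) for directness. Your dimension-count alternative is a harmless repackaging of the same ingredients.
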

\begin{proof}
The matrix $\pi(S)$ is  self-adjoint (i.e., $\pi(S)^*=\pi(S)$), and as a result it is diagonalizable. So, one can build an orthonormal basis of $\bbC^{d_\pi}$ consisting of eigenvectors of $\pi(S)$. In other words, we can write 
\begin{equation}\label{eq:spectral-decomp}
    \bbC^{d_\pi}=\bigoplus_{\lambda\in\sigma(\pi(S))} E_\lambda(\pi(S)), 
\end{equation}
where $\sigma(\pi(S))$ denotes the spectrum of the matrix $\pi(S)$.
Now, consider an arbitrary element $\pi_{\xi,i}\in\E_{\pi,i}$, and write its linear expansion $\pi_{\xi,i}=\sum_{k=1}^{d_\pi} x_k\pi_{k,i}$. Using \eqref{eq:spectral-decomp}, the vector 
$X=[x_1,\ldots,x_{d_\pi}]^{\t}\in \bbC^{d_\pi}$ can be written as a linear combination $X=\sum_{\lambda\in \sigma(\pi(S))} Y_\lambda$, with $Y_\lambda\in E_\lambda(\pi(S))$. Denoting the $i$th row of $\pi(z)$ by $R_{\pi,i}(z)$, for every $z\in\bbS_n$, we have
$\pi_{\xi,i}(z)=(R_{\pi,i}(z)) X=\sum_{\lambda\in \sigma(\pi(S))} R_{\pi,i}(z)Y_\lambda. $
%
Let $(y_\lambda)_k$ denote the $k$th component of $Y_\lambda$. For each $\lambda$, we have $R_{\pi,i}(z)Y_\lambda=\sum_{k=1}^{d_\pi} (y_\lambda)_k\pi_{k,i}$ belongs to $Z_{\pi,i,\lambda}$. So,  
we conclude that 
$$\E_{\pi,i}\subseteq \sum_{\lambda\in \sigma(\pi(S))} Z_{\pi,i,\lambda}.$$ 
On the other hand,  $\E_{\pi,i}\supseteq \sum_{\lambda\in \sigma(\pi(S))} Z_{\pi,i,\lambda}$ is a trivial consequence of the definition of $Z_{\pi, i, \lambda}$. 
Finally, by Lemma~\ref{lemma: isometry} (iv), this sum is a direct sum.
\end{proof}

\begin{definition}
For $\pi\in \widehat{\bbS_n}$ and an eigenvalue $\lambda$ of $\pi(S)$, let ${\mathcal G}_{\pi,\lambda}$ denote the collection of all frames for  $E_\lambda(\pi(S))$. We define ${\mathcal G}_{\pi,\lambda}=\emptyset$ if $\lambda$ is not an eigenvalue of $\pi(S)$.
Elements of ${\mathcal G}_{\pi,\lambda}$ are denoted by calligraphic font, e.g.~${\cal F}$.
\end{definition}

\begin{theorem}\label{thm: frame} 
For every $\pi\in \widehat{\bbS_n}$ of dimension $d_\pi$ and every eigenvalue $\lambda$ of $\pi(S)$, let 
${\cal F}^{\pi,\lambda}_1\ldots, {\cal F}^{\pi,\lambda}_{d_\pi}\in {\mathcal G}_{\pi,\lambda}$ be given frames. 
Then we have the following.
\begin{itemize}
\item[(i)] The collection 
$$\Phi_{\pi,i}=\left\{\widetilde{\Theta}_{\pi,i,\lambda}(\psi):\  \psi\in {\cal F}^{\pi,\lambda}_i, \,  \lambda\in\sigma(\pi(S))\right\}$$ is a frame for $\E_{\pi,i}$,
where $\sigma(\pi(S))$ denotes the spectrum of the matrix $\pi(S)$. 
\item[(ii)] The collection 
$\Phi=\bigcup_{\pi\in\widehat{\bbS_n}, 1\leq i\leq d_\pi}\Phi_{\pi,i}$
is a Frobenius--Schur frame for $L^2(\bbS_n)$, compatible with $S$. 
\item[(iii)] Every  frame for $L^2(\bbS_n)$  which is both Frobenius--Schur and compatible with $S$ is of the form described in (ii).
\end{itemize}
\end{theorem}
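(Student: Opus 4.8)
The plan is to prove the three parts in order, using the structural results already established: the isometry $\widetilde{\Theta}_{\pi,i,\lambda}$ from Lemma~\ref{lemma: isometry}(iii), the orthogonal decomposition $\E_{\pi,i}=\bigoplus_{\lambda} Z_{\pi,i,\lambda}$ from Theorem~\ref{thm:Z-pi-decomp}, and the global Frobenius--Schur decomposition $L^2(\bbS_n)=\bigoplus_{\pi,i}\E_{\pi,i}$ from Theorem~\ref{thm:schur-ortho}(iii). The guiding principle is a general ``gluing'' fact: if a Hilbert space $\HH$ decomposes as an \emph{orthogonal} direct sum $\HH=\bigoplus_\alpha \HH_\alpha$ and each $\HH_\alpha$ carries a frame with bounds $A_\alpha\le B_\alpha$, then the union of these frames is a frame for $\HH$ with bounds $\inf_\alpha A_\alpha$ and $\sup_\alpha B_\alpha$, \emph{provided} these are positive and finite — which is automatic here since there are only finitely many summands (finitely many $\pi\in\widehat{\bbS_n}$, finitely many indices $i$, finitely many eigenvalues $\lambda$).

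For part (i): since $\widetilde\Theta_{\pi,i,\lambda}$ is an isometry from $E_\lambda(\pi(S))$ onto $Z_{\pi,i,\lambda}$ (Lemma~\ref{lemma: isometry}(ii),(iii)), it carries the frame ${\cal F}_i^{\pi,\lambda}$ for $E_\lambda(\pi(S))$ to a frame for $Z_{\pi,i,\lambda}$ with the same bounds; indeed for $v\in Z_{\pi,i,\lambda}$, writing $v=\widetilde\Theta_{\pi,i,\lambda}(w)$, one has $\langle v,\widetilde\Theta_{\pi,i,\lambda}(\psi)\rangle=\langle w,\psi\rangle$ by the isometry property, so the frame inequalities transfer verbatim. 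Then $\E_{\pi,i}=\bigoplus_\lambda Z_{\pi,i,\lambda}$ is a finite orthogonal direct sum by Theorem~\ref{thm:Z-pi-decomp}, so applying the gluing fact over $\lambda\in\sigma(\pi(S))$ shows $\Phi_{\pi,i}$ is a frame for $\E_{\pi,i}$, with lower bound $\min_\lambda A^{\pi,\lambda}_i>0$ and upper bound $\max_\lambda B^{\pi,\lambda}_i<\infty$.

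For part (ii): again $L^2(\bbS_n)=\bigoplus_{\pi,i}\E_{\pi,i}$ is a finite orthogonal direct sum by Theorem~\ref{thm:schur-ortho}(iii), so the gluing fact applied to the frames $\Phi_{\pi,i}$ from part (i) yields that $\Phi=\bigcup_{\pi,i}\Phi_{\pi,i}$ is a frame for $L^2(\bbS_n)$. That $\Phi$ is Frobenius--Schur compatible with $S$ is immediate from Lemma~\ref{lem:frob-schur-Z description}: every atom of $\Phi$ lies in some $Z_{\pi,i,\lambda}$ (being the image under $\widetilde\Theta_{\pi,i,\lambda}$ of a vector in $E_\lambda(\pi(S))$, hence of the form $\sum_k x_k\pi_{k,i}$ with $[x_k]\in E_\lambda(\pi(S))$).

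For part (iii): let $\Psi$ be an arbitrary Frobenius--Schur frame compatible with $S$. By Lemma~\ref{lem:frob-schur-Z description}, each atom of $\Psi$ lies in some $Z_{\pi,i,\lambda}$; group the atoms according to the triple $(\pi,i,\lambda)$ they belong to (choosing one such triple per atom if there is ambiguity, though the $Z$'s are in fact mutually orthogonal by Lemma~\ref{lemma: isometry}(iv), so the grouping is essentially forced), and call these subcollections $\Psi_{\pi,i,\lambda}$. The key point is that $\Psi$ being a frame for all of $L^2(\bbS_n)$ forces each $\Psi_{\pi,i,\lambda}$ to be a frame for $Z_{\pi,i,\lambda}$: indeed, testing the frame inequality on a vector $v\in Z_{\pi,i,\lambda}$, all inner products $\langle v,\psi\rangle$ with atoms $\psi$ outside $Z_{\pi,i,\lambda}$ vanish by orthogonality (Lemma~\ref{lemma: isometry}(iv) and Theorem~\ref{thm:schur-ortho}), so $A\|v\|^2\le\sum_{\psi\in\Psi_{\pi,i,\lambda}}|\langle v,\psi\rangle|^2\le B\|v\|^2$; in particular the lower bound guarantees $\Psi_{\pi,i,\lambda}$ is complete in $Z_{\pi,i,\lambda}$. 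Pulling back through the isometry $\widetilde\Theta_{\pi,i,\lambda}^{-1}$ gives a frame ${\cal F}^{\pi,\lambda}_i:=\widetilde\Theta_{\pi,i,\lambda}^{-1}(\Psi_{\pi,i,\lambda})$ for $E_\lambda(\pi(S))$, i.e.\ an element of ${\mathcal G}_{\pi,\lambda}$. Reassembling, $\Psi=\bigcup_{\pi,i,\lambda}\Psi_{\pi,i,\lambda}=\bigcup_{\pi,i}\bigcup_\lambda\widetilde\Theta_{\pi,i,\lambda}({\cal F}^{\pi,\lambda}_i)=\bigcup_{\pi,i}\Phi_{\pi,i}$, which is exactly the form in (ii).

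The main obstacle — and really the only place requiring care — is part (iii), specifically justifying that the ``restriction'' of a global frame to an orthogonal summand is still a frame \emph{for that summand}, and in particular that it is complete there rather than merely Bessel. This is where the orthogonality of the $Z_{\pi,i,\lambda}$ (Lemma~\ref{lemma: isometry}(iv)) combined with the lower frame bound of $\Psi$ does the work: completeness in $Z_{\pi,i,\lambda}$ follows because a nonzero $v\in Z_{\pi,i,\lambda}$ orthogonal to every atom of $\Psi_{\pi,i,\lambda}$ would be orthogonal to \emph{all} of $\Psi$, contradicting $A\|v\|^2>0$. A secondary subtlety is the (harmless) ambiguity in assigning an atom to a triple $(\pi,i,\lambda)$ when it happens to lie in an intersection of such subspaces — but since distinct $Z_{\pi,i,\lambda}$ are orthogonal, the only shared vector is $0$, so no genuine atom is ambiguous. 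Everything else is a finite-sum bookkeeping argument and the transfer of frame bounds along isometries, both routine.
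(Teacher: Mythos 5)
Your proposal is correct and follows essentially the same route as the paper: transfer frames through the isometries $\widetilde{\Theta}_{\pi,i,\lambda}$, glue over the orthogonal decompositions of Theorem~\ref{thm:Z-pi-decomp} and Theorem~\ref{thm:schur-ortho}(iii), and for (iii) partition the atoms by the subspaces $Z_{\pi,i,\lambda}$ and pull back. Your treatment of (iii) spells out the one point the paper leaves implicit --- that orthogonality of the $Z_{\pi,i,\lambda}$ plus the global lower frame bound forces each subcollection to be a frame (in particular complete) for its summand --- which is a welcome addition rather than a deviation.
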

\begin{proof}
 First let $1\leq i\leq d_{\pi}$, and note that any frame for $E_{\lambda}(\pi(S))$ can be lifted, via the map $\widetilde{\Theta}_{\pi,i,\lambda}$, to a frame for $Z_{\pi, i, \lambda}$ with the same upper/lower frame bounds. That is, if $\{\phi_x\}_{x\in X}$ belongs to the frame space ${\mathcal G}_{\pi,\lambda}$, then 
 $\{\widetilde{\Theta}_{\pi,i,\lambda}(\phi_x)\}_{x\in X}$ forms a frame for $Z_{\pi, i, \lambda}$.
 This is indeed the case, as  $\widetilde{\Theta}_{\pi,i,\lambda}$ is an isometric isomorphism (Lemma~\ref{lemma: isometry}).
So $\Phi_{\pi,i}$ is just a union of frames for $Z_{\pi,i,\lambda}$, as $\lambda\in\sigma(\pi(S))$ varies.
Using the direct-sum decomposition of Theorem~\ref{thm:Z-pi-decomp}, this union results in a frame for $\E_{\pi,i}$. This proves (i).

We prove (ii) now. The fact that $\Phi$ is a frame for $L^2(\bbS_n)$ follows from part (i) and Theorem~\ref{thm:schur-ortho}, and  $\Phi$ being a compatible Frobenius--Schur frame follows directly from Lemma~\ref{lem:frob-schur-Z description}, Lemma~\ref{lemma: isometry} (ii) and the definition of $\Phi$.
Finally, part (iii) follows from the fact that a compatible Frobenius--Schur frame can be naturally partitioned into frames for $Z_{\pi,i,\lambda}$. Applying the map $\Theta^{-1}_{\pi,i,\lambda}$ to those frames finishes the proof.
\end{proof}

The construction in Theorem \ref{thm: frame} allows us to control the frame bounds as well. We get the following corollary.
\begin{corollary}
With notation as in Theorem~\ref{thm: frame}, suppose the frames ${\cal F}^{\pi,\lambda}_i$  are Parseval  (resp.~tight) for all $\pi$, 
$i$, and $\lambda$. 
Then $\Phi_{\pi,i}$ is a Parseval (resp.~tight) frame for $\E_{\pi, i}$, and $\Phi$ is a Parseval (resp.~tight) frame for $L^2({\mathbb S}_n)$.
\end{corollary}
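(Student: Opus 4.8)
The plan is to leverage the two structural facts already established: the isometry property of $\widetilde{\Theta}_{\pi,i,\lambda}$ (Lemma~\ref{lemma: isometry}(iii)) and the orthogonal direct-sum decompositions $\E_{\pi,i}=\bigoplus_\lambda Z_{\pi,i,\lambda}$ (Theorem~\ref{thm:Z-pi-decomp}) and $L^2(\bbS_n)=\bigoplus_{\pi,i}\E_{\pi,i}$ (Theorem~\ref{thm:schur-ortho}(iii)). The key observation is that isometries transport frame bounds exactly, and that stacking frames across mutually orthogonal subspaces preserves \emph{common} frame bounds. So the corollary should reduce to tracking the bounds through each of these three stages.

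First I would recall that if $U\colon\HH_1\to\HH_2$ is an isometric isomorphism and $\{\psi_x\}$ is a frame for $\HH_1$ with bounds $A,B$, then $\{U\psi_x\}$ is a frame for $\HH_2$ with the \emph{same} bounds $A,B$: for $v\in\HH_2$ write $v=Uw$, and $\sum_x|\langle v,U\psi_x\rangle|^2=\sum_x|\langle U^*Uw,\psi_x\rangle|^2=\sum_x|\langle w,\psi_x\rangle|^2$, which lies between $A\|w\|^2=A\|v\|^2$ and $B\|v\|^2$. Hence if ${\cal F}^{\pi,\lambda}_i$ is Parseval (bounds $1,1$) resp.~tight (bounds $A,A$ for some $A>0$), then $\{\widetilde{\Theta}_{\pi,i,\lambda}(\psi):\psi\in{\cal F}^{\pi,\lambda}_i\}$ is a Parseval resp.~$A$-tight frame for $Z_{\pi,i,\lambda}$. (In the ``tight'' reading of the corollary I would assume the common bound $A$ is the same across all $\lambda\in\sigma(\pi(S))$ for fixed $\pi,i$, which is what makes the next step go through; this should be stated.)

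Next I would argue that a union of $A$-tight frames over an orthogonal direct sum is again $A$-tight for the whole space. Concretely, with $\E_{\pi,i}=\bigoplus_{\lambda}Z_{\pi,i,\lambda}$ orthogonal (Lemma~\ref{lemma: isometry}(iv) guarantees this), any $v\in\E_{\pi,i}$ decomposes as $v=\sum_\lambda v_\lambda$ with $v_\lambda\in Z_{\pi,i,\lambda}$ and $\|v\|^2=\sum_\lambda\|v_\lambda\|^2$; since each atom of $\Phi_{\pi,i}$ lies in some single $Z_{\pi,i,\lambda}$, orthogonality kills the cross terms and $\sum_{\phi\in\Phi_{\pi,i}}|\langle v,\phi\rangle|^2=\sum_\lambda\sum_{\phi\in\Phi^{(\lambda)}_{\pi,i}}|\langle v_\lambda,\phi\rangle|^2=\sum_\lambda A\|v_\lambda\|^2=A\|v\|^2$. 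This proves $\Phi_{\pi,i}$ is $A$-tight (and Parseval when $A=1$). Then I would repeat the identical orthogonality argument one level up, using $L^2(\bbS_n)=\bigoplus_{\pi,i}\E_{\pi,i}$: provided the common tight bound $A$ is shared across \emph{all} $(\pi,i)$, the union $\Phi=\bigcup_{\pi,i}\Phi_{\pi,i}$ is $A$-tight for $L^2(\bbS_n)$; in the Parseval case $A=1$ automatically and no such global-coherence hypothesis is needed.

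The main obstacle—really a subtlety to get right rather than a hard estimate—is the bookkeeping of the tight constant: ``tight'' for each ${\cal F}^{\pi,\lambda}_i$ individually only forces each piece to be tight with \emph{its own} bound, and the union of tight frames with differing bounds is in general merely a frame, not tight. So the honest statement needs a uniformity assumption (all ${\cal F}^{\pi,\lambda}_i$ share a single bound $A$), which I would make explicit; the Parseval case is clean precisely because Parseval pins $A=1$ for everyone. Aside from that, every step is a routine application of isometry-invariance and Pythagoras over orthogonal summands, so I expect no genuine difficulty beyond stating the hypotheses carefully.
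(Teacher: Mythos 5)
Your proof is correct and follows exactly the route the paper intends (the paper states this corollary without proof, as an immediate consequence of the isometry in Lemma~\ref{lemma: isometry}(iii) and the orthogonal decompositions in Theorem~\ref{thm:Z-pi-decomp} and Theorem~\ref{thm:schur-ortho}(iii)). Your observation that the ``tight'' case requires all the frames ${\cal F}^{\pi,\lambda}_i$ to share a \emph{common} frame bound --- since a union of tight frames with different constants over orthogonal summands is merely a frame --- is a genuine and worthwhile clarification of a hypothesis the paper leaves implicit.
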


\begin{remark}[Relation with Results in \cite{paper}]\label{rem:relation-to-paper}
Our frame construction relates with the dictionaries given in \cite{paper}. Let $\pi\in\widehat{\bbS_n}$ be associated with a partition $\gamma$ of $n$. Then the space $W_\gamma$ given in \cite[Equation (1)]{paper} is simply the subspace of ${\mathbb C}[\bbS_n]$ containing all coefficient functions associated with $\pi$, namely, 
$W_\gamma=\oplus_{i=1}^{d_\pi}{\cal E}_{\pi,i}.$
%
One can verify that the space $Z_{\gamma,\lambda}$ as defined in \cite[Proposition 1]{paper} is exactly the space 
$Z_{\pi,\lambda}$ defined as 
$$Z_{\pi,\lambda}=\bigoplus_{i=1}^{d_{\pi}} Z_{\pi,i, \lambda}.$$
The frame constructed in \cite{paper} is therefore a Frobenius--Schur frame compatible with the specific generating set of all adjacent transpositions. 
\end{remark}

\section{Compatible Frobenius--Schur frames for $L^2(\mathbb S_3)$}\label{sec:example}
In this section, we provide a recipe to construct Frobenius--Schur frames for $L^2(\bbS_3)$ which are compatible with the generating set $S=\{(12),(23)\}$. 
A more interesting example would be the construction of Frobenius--Schur frames for $L^2(\bbS_n)$ with $n\geq 4$, because some of the associated eigenvalues will have higher multiplicity. Due to space constraints, we have chosen to only provide details for $L^2(\bbS_3)$; these steps, however, can be used as a guide when dealing with higher $n$.

We order elements of $\bbS_3$ as follows:  $\id,(12),(23),(13),(123),(132)$. From Subsection~\ref{subsec:rep}, we have three irreducible representations of $\bbS_3$, as there are only three Young diagrams of size 3 corresponding with the partitions (1,1,1), (2,1) and (3) respectively:
\begin{center}
\ytableausetup{smalltableaux}
\ydiagram{1,1,1}
\hspace{0.5cm}
\ydiagram{2,1}
\hspace{0.5cm}
\ydiagram{3}
\end{center}

For the partition (3), the only standard Young tableau is the rightmost Young diagram labeled with 1, 2, and 3 from left to right.
Therefore, the corresponding representation is 1-dimensional. The  representation associated with this partition is the {\bf trivial representation} of $\bbS_3$, denoted by $\iota$, which maps every element of $\bbS_3$
to 1. The unique coefficient function of $\iota$ is given by $\iota_{1,1}=[1,1,1,1,1,1]^{\t}.$

For the partition (2,1), the standard Young tableaux are
\begin{center}
\ytableausetup{smalltableaux}
    \begin{ytableau}
    1 & 2 \\
    3
    \end{ytableau}
    \hspace{0.5cm}
    \begin{ytableau}
    1 & 3 \\
    2
    \end{ytableau}
\end{center}
so the corresponding representation is 2-dimensional. The representation associated with this partition is the {\bf standard representation} of $\bbS_3$, denoted $\pi$ and defined as
follows:
\begin{equation*}
     \pi(\id)=I_2, \ 
    \pi(12)=\begin{bmatrix} -\frac{1}{2} & \frac{\sqrt{3}}{2}\\ \frac{\sqrt{3}}{2} & \frac{1}{2}\end{bmatrix},\
    \pi(23)=\begin{bmatrix} 1 & 0\\ 0 & -1\end{bmatrix}.
\end{equation*}
Since $\pi$ is multiplicative, and $\{(12),(23)\}$ is a generator for $\bbS_3$, the above matrices are enough to define $\pi$ on $\bbS_3$.
The coefficient functions of $\pi$ are
\begin{align*}
    &\pi_{1,1}=\left[1, \tfrac{-1}{2}, 1, \tfrac{-1}{2}, \tfrac{-1}{2},\tfrac{-1}{2}\right]^{\t},\\
    &\pi_{2,1}=\left[0, \tfrac{\sqrt{3}}{2}, 0, \tfrac{-\sqrt{3}}{2},\tfrac{-\sqrt{3}}{2},\tfrac{\sqrt{3}}{2}\right]^{\t},\\
    &\pi_{1,2}= \left[0, \tfrac{\sqrt{3}}{2}, 0, \tfrac{-\sqrt{3}}{2}, \tfrac{\sqrt{3}}{2}, \tfrac{-\sqrt{3}}{2}\right]^{\t},\\
    &\pi_{2,2}= \left[1, \tfrac{1}{2}, -1, \tfrac{1}{2}, \tfrac{-1}{2}, \tfrac{-1}{2}\right]^{\t}.
\end{align*}

For the partition (1,1,1), the only standard Young tableau is the leftmost Young diagram labeled with 1, 2, and 3 from top to bottom.
Then the corresponding representation is 1-dimensional. The representation associated with this partition is the {\bf alternating representation} of $\bbS_3$, denoted by $\tau$, which maps $\sigma\in \bbS_3$ to the sign of the permutation. The unique coefficient function of $\tau$ is given by $\tau_{1,1}=[1,-1,-1,-1,1,1]^{\t}.$
Next, we follow Theorem~\ref{thm: frame} to obtain all Frobenius--Schur frames compatible with $S$. Note that 
$\pi(S)=\pi(12)+\pi(23)$ has  eigenvalues  $-1$ and $1$, and the corresponding eigenvectors are 
$v_1=[\sqrt{3},  1]^{\t}, \ v_{-1}=[-\frac{1}{\sqrt{3}},  1]^{\t}.$
%
Taking $i=1,2$ and $\lambda=1,-1$, we get four $Z_{\pi, i, \lambda}$ spaces as defined in Theorem~\ref{thm: frame},
\begin{align*}
&Z_{\pi, 1, 1}=\bbC\left[\sqrt{3},0,\sqrt{3},-\sqrt{3},-\sqrt{3},0\right]^{\t}\\
&Z_{\pi, 1, -1}=\bbC\left[\tfrac{-1}{\sqrt{3}},\tfrac{2}{\sqrt{3}},\tfrac{-1}{\sqrt{3}},\tfrac{-1}{\sqrt{3}},\tfrac{-1}{\sqrt{3}},\tfrac{2}{\sqrt{3}}\right]^{\t}\\
&Z_{\pi, 2, 1}=\bbC\left[1,2,-1,-1,1,-2\right]^{\t}\\
&Z_{\pi, 2, -1}=\bbC\left[1,0,-1,1,-1,0\right]^{\t}.
\end{align*}


For the 1-dimensional representation $\iota$, we have $\iota(S)=\iota(12)+\iota(23)=2$, so $\lambda=2$. Then we get the space $$Z_{\iota, 1, 2}=\bbC\left[1,1,1,1,1,1\right]^{\t}.$$

Similarly, we get the space 
$$Z_{\tau, 1, -2}=\bbC\left[1,-1,-1,-1,1,1\right]^{\t}.$$

The Frobenius--Schur frames compatible with $S$ are precisely the union of frames for  $Z_{\iota, 1, 2}, Z_{\tau, 1, -2}$, $Z_{\pi,i,\pm1}$ (for $i=1,2$).





\end{document}